\newtheorem{proposition}{Proposition}
\theoremstyle{definition}
\newcommand{\R}{\mathbb{R}} 
\newcommand{\C}{\mathbb{C}} 
\newcommand{\real}{\mathbb R} 
\newcommand{\mo}[1]{\left| #1 \right|} 
\newcommand{\abs}{\mo} 
\newcommand{\e}{{\rm e}} 
\newcommand{\rr}{\mathcal{R}}
\newcommand{\hi}{\mathcal{H}} 
\newcommand{\hh}{\mathcal{H}} 
\newcommand{\lhs}{\mathcal{L}_s(\hi)} 
\newcommand{\elles}[1]{\mathcal{L}_s \left( #1 \right)} 
\newcommand{\tr}[1]{{\rm tr}\left[#1\right]} 
\newcommand{\id}{\mathbbm{1}} 
\renewcommand{\rho}{\varrho}
\newcommand{\rank}[1]{\mathrm{rank}(#1)} 
\newcommand{\rankd}[1]{\mathrm{rank}_{\downarrow}(#1)} 
\renewcommand{\Re}{{\rm Re}\,}
\renewcommand{\Im}{{\rm Im}\,}
\newcommand{\Ao}{\mathsf{A}}
\newcommand{\Bo}{\mathsf{B}}
\begin{document}\setlength{\arraycolsep}{2pt}

\title[Expanding the Principle of Local Distinguishability]{Expanding the Principle of Local Distinguishability}

\author{Claudio Carmeli}
\email{claudio.carmeli@gmail.com}
\affiliation{DIME, Universit\`a di Genova, Via Magliotto 2, I-17100 Savona, Italy}

\author{Teiko Heinosaari}
\email{teiko.heinosaari@utu.fi}
\affiliation{Turku Centre for Quantum Physics, Department of Physics and Astronomy, University of Turku,  FI-20014, Finland}

\author{Jussi Schultz}
\email{jussi.schultz@gmail.com}
\affiliation{Dipartimento di Matematica, Politecnico di Milano, Piazza Leonardo da Vinci 32, I-20133 Milano, Italy}
\affiliation{Turku Centre for Quantum Physics, Department of Physics and Astronomy, University of Turku, FI-20014, Finland}

\author{Alessandro Toigo}
\email{alessandro.toigo@polimi.it}
\affiliation{Dipartimento di Matematica, Politecnico di Milano, Piazza Leonardo da Vinci 32, I-20133 Milano, Italy}
\affiliation{I.N.F.N., Sezione di Milano, Via Celoria 16, I-20133 Milano, Italy}

\begin{abstract}
The principle of local distinguishability states that an arbitrary physical state of a bipartite system can be determined by the combined statistics of local measurements performed on the subsystems. A necessary and sufficient requirement for the local measurements is that each one must be able to distinguish between all pairs of states of the respective subsystems. We show that if the task is changed into the determination of an arbitrary bipartite pure state, then at least in certain cases it is possible to restrict to local measurements which can distintinguish all pure states but not all states. Moreover, we show that if the local measurements are such that the purity of the bipartite state  can be verified from the statistics without any prior assumption, then in these special cases also this property is carried over to the composite measurement. 
These surprising facts give evidence that the principle of local distinguishability may be expanded beyond its usual applicability.
\end{abstract}

\pacs{03.65.Ta, 03.65.Wj}

\maketitle

\section{Introduction}
Quantum theory is an example of a physical theory which satisfies  the \emph{principle of local distinguishability}.
This means that if two states of a composite system are different, then they can be distinguished using the combined statistics of some appropriate local measurements on the component systems. 
This feature of quantum theory has both practical and foundational relevance. From the practical point of view, local measurements are obviously easier to implement than global measurements. On the foundational side, local distinguishability has been considered such an important feature  
that it has been taken as an axiom in several derivations of quantum theory \cite{Barrett07,ChDaPe11,MaMu11}.

The principle of local distinguishability is a statement concerning arbitrary states of composite physical systems. 
However, it is completely reasonable to ask if the same principle holds when only pure states are considered, and this is precisely the focus of this paper. To be more explicit, suppose that Alice and Bob both perform local measurements that are capable of identifying an unknown pure state among all pure states on their respective components; see Fig. \ref{fig:local}. Are they then able, using the combined statistics of those measurements, to identify an unknown pure state among all pure states of the composite system?

Since pure states represent the states of maximal information for a system, this variation of the principle of local distinguishability is of foundational interest.  
The main practical motivation comes from the fact that prior information, in this case the purity of the unknown state, can be exploited to drastically reduce the amount of resources needed for state tomography \cite{HeMaWo13, ChDaJietal13, CaHeScTo14}.  
Specifically, for pure state determination the minimum number of measurement outcomes needed to succeed in the task reduces from a quadratic (in the dimension of the system) to a linear expression. The validity of this expanded principle of local distinguishability would therefore imply that the experimenter could take advantage of not only the simpler setup coming from the locality of the measurements, but also the sufficiency of the restricted resources.  

In this paper we show that Alice and Bob can succeed in their task of pure state determination at least in two cases: (i) if at least one party, Alice or Bob, can distinguish between {\em all} states of their respective subsystem, or (ii) if at least one of the subsystems is either a qubit or a qutrit. We then show that if Alice and Bob possess measurements which are in addition capable of verifying the purity of their systems from the statistics, then in the above cases the composite measurement also has this property. Finally, we generalize case (ii) to multipartite systems consisting of qubits and qutrits, and use this to obtain a special class of measurements on higher dimensional systems, for which the expanded principle of local distinguishability is valid.

\section{Pure state informational completeness}
Recall that a measurement is called \emph{informationally complete} if any two different states can be distinguished from the outcome statistics \cite{Prugovecki77}. 
Mathematically such a measurement is described by a positive operator valued measure (POVM) $\Ao$ such that the elements $\Ao(x)$ span the real vector space $\lhs$ of selfadjoint operators on the Hilbert space $\hh$ of the system \cite{Busch91,SiSt92}. (We always assume  $\dim\hi<\infty$.)
This is equivalent to the requirement that the expectation value of any observable $O$ can be written as a linear combination of the probabilities $\varrho^\Ao(x) = \tr{\varrho \Ao(x)}$,
\begin{equation}\label{eq:exp}
\langle O\rangle = \sum_x \alpha_x \varrho^\Ao(x) \, .
\end{equation}

As a variation of informational completeness, we say that a measurement is \emph{pure state informationally complete} if any two different \emph{pure} states give different measurement outcome statistics \cite{BuLa89}. 
In order to formulate a mathematical criterion for this property, we first define $\rr(\Ao)$ to be the real linear span of the operators $\Ao(x)$ of a POVM $\Ao$, i.e., $\rr(\Ao) = \{ \sum_x r_x \Ao(x) :r_x\in\real\}$. 
In physical terms, $\rr(\Ao)$ is the set of those observables for which we can calculate the expectation value in the form \eqref{eq:exp}. 
As said before, $\rr(\Ao)=\lhs$ if and only if $\Ao$ is informationally complete.

\begin{figure}
\includegraphics[width=7.5cm]{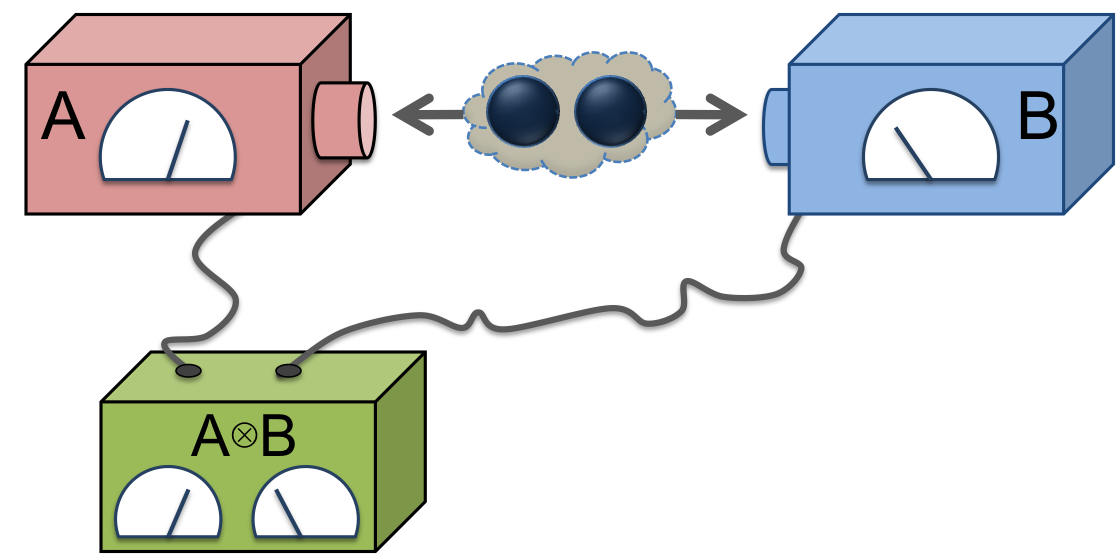} 
\caption{(Color online) For a bipartite quantum system, any local measurements $\Ao$ and $\Bo$ performed on the subsystems can be combined to yield a measurement $\Ao\otimes\Bo$ on the composite system. The question then is if $\Ao\otimes\Bo$ can distinguish between all bipartite pure states whenever $\Ao$ and $\Bo$ can distinguish between all pure states of the respective subsystems.}
\label{fig:local}
\end{figure}

We denote by $\rr(\Ao)^\perp\subset\lhs$ the orthogonal complement of $\rr(\Ao)$ with respect to the Hilbert-Schmidt inner product $\langle S\mid T\rangle =\tr{ST}$. 
It is easy to verify that two states $\varrho_1$ and $\varrho_2$ are indistinguishable by a measurement of $\Ao$  if and only if $\varrho_1-\varrho_2 \in \rr(\Ao)^\perp$. 
Therefore, the pure state informational completeness of $\Ao$ is equivalent to the condition that $\rr(\Ao)^\perp$ does not contain any matrix of rank 2 \cite{HeMaWo13}. 
Indeed, since $\rr(\Ao)$ contains the identity $\id$, the elements of $\rr(\Ao)^\perp$ are traceless and therefore the existence of such a matrix would yield two indistinguishable pure states via its spectral decomposition. 

The two properties of informational completeness and pure state informational completeness are inequivalent for all systems whose Hilbert space is at least three dimensional \cite{HeMaWo13}. 
For instance, an informationally complete measurement for a qutrit system requires at least $9$ outcomes, but a pure state informationally complete measurement can have minimally $8$ outcomes 
 (for a concrete example, see \cite{CaHeTo13}).
The fact that the two concepts are equivalent for qubit systems follows easily from the previously mentioned rank condition: any element of $\rr(\Ao)^\perp$ is a selfadjoint and traceless $2\times 2$ matrix, hence it is either zero or has rank $2$; therefore, $\Ao$ is pure state informationally complete if and only if $\rr(\Ao)^\perp = \{0\}$, that is, $\Ao$ is informationally complete.

\smallskip

\section{Expanding local distinguishability}\label{sec:expanding}
Let us turn to a setting where the unknown state is a joint state of a composite system, and two local measurements are performed on its subsystems; see Fig. \ref{fig:local}. 
Alice and Bob thus measure some POVMs $\Ao$ and $\Bo$ acting on the Hilbert spaces $\hh_A$ and $\hh_B$ of the subsystems, respectively. The {composite} measurement is then described by the tensor product POVM $(\Ao\otimes\Bo)(x,y) = \Ao(x)\otimes\Bo(y)$ acting on $\hh_A\otimes\hh_B$.
Our main question can be formulated as follows:
\begin{quote}
\emph{If $\Ao$ and $\Bo$ are  pure state informationally complete, does it follow that also $\Ao\otimes\Bo$ is pure state informationally complete?}
\end{quote}
We first note that the converse is true: if the composite measurement $\Ao\otimes \Bo$ is pure state informationally complete, then so are both of the components. 
Indeed, if it were the case that, say, $\Ao$ could not distinguish some pair of distinct pure states $\varrho_1$ and $\varrho_2$, then any pure state $\sigma$ of Bob's component would yield distinct pure states $\varrho_1\otimes \sigma$ and $\varrho_2\otimes \sigma$ which would be indistinguishable by $\Ao\otimes \Bo$.

In order to get a grasp of the problem at hand, we need to understand the structure of the complement space $\rr(\Ao\otimes\Bo)^\perp $. 
First note that since $\rr(\Ao\otimes\Bo) = \rr(\Ao)\otimes\rr(\Bo)$, each of the three orthogonal subspaces $\rr(\Ao)^\perp \otimes \rr(\Bo)$, $ \rr(\Ao) \otimes \rr(\Bo)^\perp$, and   $ \rr(\Ao)^\perp \otimes \rr(\Bo)^\perp$ is contained in $\rr(\Ao\otimes\Bo)^\perp $, and hence so is their direct sum. By dimension counting it can be verified that this direct sum is actually equal to the complement space. 
Furthermore, since  $\elles{\hi_A} = \rr(\Ao)\oplus\rr(\Ao)^\perp$ and similarly for $\elles{\hi_B}$, we have the
 following expressions:
\begin{align*}
\rr(\Ao\otimes\Bo)^\perp &= (\elles{\hi_A}\otimes \rr(\Bo)^\perp) \oplus (\rr(\Ao)^\perp \otimes \rr(\Bo)) \\
& = (\rr(\Ao)^\perp \otimes \elles{\hi_B}) \oplus (\rr(\Ao) \otimes \rr(\Bo)^\perp) 
\end{align*}
These equations will be used repeatedly in the rest of the paper.

\smallskip

\subsection{Informational completeness on one side}
We begin our investigation by considering the special case where one party, say Alice, can perform an informationally complete measurement. We then have $\rr (\Ao)^\perp =\{ 0 \}$, which implies that 
\begin{equation}\label{eq:ic_alice}
\rr (\Ao\otimes \Bo)^\perp = \mathcal{L}_s(\hh_A)\otimes \rr (\Bo)^\perp
\end{equation}
by our previous observation. In particular, if also Bob performs an informationally complete measurement, then $\rr (\Ao\otimes \Bo)^\perp=\{0\}$, which confirms the usual form of local distinguishability:  $\Ao\otimes \Bo$ is informationally complete if both $\Ao$ and $\Bo$ are such. 
Interestingly, the next result shows that also pure state informational completeness on Bob's side carries over to the composite measurement.

\begin{proposition}\label{prop:ic_alice}
Let $\Ao$ be an informationally complete and $\Bo$ a pure state informationally complete measurement. Then $\Ao\otimes\Bo$ is pure state informationally complete.
\end{proposition}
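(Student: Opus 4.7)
The plan is to show that the orthogonal complement $\rr(\Ao\otimes\Bo)^\perp$ contains no rank-$2$ selfadjoint matrix. By equation \eqref{eq:ic_alice}, this complement equals $\elles{\hi_A}\otimes \rr(\Bo)^\perp$, so the rank condition has to be verified on this tensor product subspace.

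The key construction is a partial compression: for any $\ket{a}\in\hi_A$, set $T_{a} := (\bra{a}\otimes\id_B)\,T\,(\ket{a}\otimes\id_B)\in\elle{\hi_B}$. Writing $T=\sum_k A_k\otimes B_k$ with $A_k\in\elles{\hi_A}$ and $B_k\in\rr(\Bo)^\perp$ gives $T_{a}=\sum_k \langle a|A_k|a\rangle\,B_k$, and since the scalars $\langle a|A_k|a\rangle$ are real, $T_a$ lies in $\rr(\Bo)^\perp$. At the same time $T_a$ is a compression of $T$, so $\rank{T_a}\leq \rank{T}$.

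Now suppose $T\in\rr(\Ao\otimes\Bo)^\perp$ has rank at most $2$. Then $\rank{T_a}\leq 2$ for every $\ket{a}\in\hi_A$. Since $\id_B \in \rr(\Bo)$, every element of $\rr(\Bo)^\perp$ is traceless, so a nonzero selfadjoint element of rank $1$ is impossible; hence $T_a$ has rank $0$ or exactly $2$. The rank-$2$ case is forbidden by the pure state informational completeness of $\Bo$, so $T_a=0$ for every $\ket{a}$. A polarization argument applied to the sesquilinear form $(\ket{a},\ket{a'})\mapsto (\bra{a}\otimes\id_B)\,T\,(\ket{a'}\otimes\id_B)$, which vanishes on the diagonal, then forces $T=0$, contradicting $\rank{T}=2$.

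The main obstacle is pinpointing a reduction that keeps us inside $\rr(\Bo)^\perp$ while controlling the rank. A plain partial trace preserves tracelessness but destroys the tensor structure needed for the rank bound, whereas the compression $T_a$ accomplishes both properties simultaneously, which is precisely what allows the pure state informational completeness of $\Bo$ to be invoked.
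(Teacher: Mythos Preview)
Your argument is correct, and it takes a genuinely different route from the paper's own proof. The paper fixes an orthonormal basis of $\hi_A$, writes $T$ as a $d_A\times d_A$ block matrix with entries $T_{jk}$ in the \emph{complex} span of $\rr(\Bo)^\perp$, and then splits into two cases: if some diagonal block $T_{jj}$ is nonzero it lies in $\rr(\Bo)^\perp$ and already has rank $\geq 3$; if all diagonals vanish, a unitary conjugation is used to show that $\rank{T}$ dominates $\max\{\rank{\Re T_{jk}},\rank{\Im T_{jk}}\}$ for a nonzero off-diagonal block. Your compression $T_a$ over \emph{all} $\ket{a}\in\hi_A$ sidesteps this dichotomy entirely: because $\langle a|A_k|a\rangle\in\R$, each $T_a$ is automatically selfadjoint and in $\rr(\Bo)^\perp$, so the rank criterion for $\Bo$ applies directly, and polarization replaces the explicit unitary trick. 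Your approach is shorter and basis-free; the paper's block machinery, on the other hand, is set up so that it can be recycled verbatim (with Cauchy interlacing replacing the plain rank bound) in the later proof of the verifiable version, Proposition~\ref{prop:ICVPIC}. It is worth noting that your argument would extend there just as well, since compressions also satisfy $\rankd{T_a}\leq\rankd{T}$ by interlacing. One minor wording point: you assume $\rank{T}\leq 2$ but conclude by ``contradicting $\rank{T}=2$''; it would be cleaner to start from a nonzero $T$ of rank at most $2$ and derive the contradiction $T=0$.
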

\begin{proof}
As we noted earlier, in order to prove the pure state informational completeness of $\Ao\otimes\Bo$, we need to show that the rank of any nonzero matrix $T$ in $\rr (\Ao\otimes \Bo)^\perp $  is at least $3$. 
From \eqref{eq:ic_alice} we see that a nonzero $T\in\rr (\Ao\otimes \Bo)^\perp $ can be written in a block form as 
\begin{align*}
T= \left(\begin{array}{cccc}
T_{11} & T_{12} & \cdots & T_{1,d_A} \\
T_{21} & T_{22} & \cdots & T_{2,d_A} \\
\vdots & \vdots & \ddots & \vdots \\ 
T_{d_A,1} & T_{d_A,2} & \cdots & T_{d_A,d_A} 
\end{array}\right) \, , 
\end{align*}
where $d_A=\dim \hi_A$ and each $T_{jk}$ is an element of the complex linear span of $\rr(\Bo)^\perp$  satisfying $T_{jk}^* = T_{kj}$. 

Firstly, suppose that $T_{jj}\neq 0$ for some $j=1,\ldots,d_A$.
We have $\rank{T_{jj}} \geq 3$ since $T_{jj}\in \rr(\Bo)^\perp$ and $\Bo$ is pure state informationally complete.  
This implies that $\rank{T} \geq 3$ as $\rank{T}\geq\rank{T_{jj}}$.

Secondly, suppose that $T_{jj}=0$ for all $j=1,\ldots,d_A$, thus $T_{jk}\neq 0$ for  some $j\neq k$.
Since $T_{jk}$ need not be selfadjoint, it may be not in $\rr(\Bo)^\perp$.
However, the real part $\Re T_{jk} = (T_{jk} + T_{jk}^*)/2$ and the imaginary part $\Im T_{jk}=(T_{jk} - T_{jk}^*)/2i$ are selfadjoint and therefore elements of $\rr(\Bo)^\perp$. 
Since $T_{jk}\neq 0$, we have $\Re T_{jk}\neq 0$ or $\Im T_{jk}\neq 0$.
It thus suffices to show that $\rank{T}\geq \max\{\rank{\Re T_{jk}}, \rank{\Im T_{jk}}\}$.
To see this, we denote
\begin{align*}
\widetilde{T} = \left(
\begin{array}{cc}
0 & T_{jk} \\
T_{jk}^\ast & 0
\end{array}
\right) \, .
\end{align*}
As $\widetilde{T}$ is a submatrix of $T$, we have $\rank{T} \geq \rank{\widetilde{T}}$.
We further observe that 
\begin{align*}
\widetilde{T}= V {\widetilde{T}}_0 V^*
\end{align*}
where
\begin{align*}
\widetilde{T}_0& = \left(
\begin{array}{cc}
\Im T_{jk} & \Re T_{jk} \\
\Re T_{jk} & -\Im T_{jk}
\end{array}
\right)
\end{align*}
and $V$ is the unitary block matrix
$$
V = \frac{1}{\sqrt{2}} \left(
\begin{array}{cc}
\id & -i\id \\
-i\id & \id
\end{array}
\right) \, .
$$
Since we have $\rank{\widetilde{T}} = \rank{\widetilde{T}_0}$ and moreover $\rank{\widetilde{T}_0}\geq \max\{\rank{\Re T_{jk}},\rank{\Im{T_{jk}}}\}$, this implies\begin{align*}
\rank{T}\geq \mathrm{max}\{ \rank{\Re T_{jk}},\rank{\Im T_{jk}} \} \geq 3 \, .
\end{align*} 
\end{proof}

\smallskip

\subsection{Qutrit on one side}
We now wish to drop the assumption of informational completeness for Alice's measurement, and assume only that she can distinguish all pure states. However, as the dimensions of the systems increase, so does the complexity of the space $\rr (\Ao\otimes\Bo)^\perp$. As a result, an exhaustive answer to our question still remains to be found. 

In the special case that Alice's system is a qutrit, the structure of $\rr (\Ao\otimes \Bo)^\perp$ is manageable. 
Suppose that $\Ao$ is pure state informationally complete. 
We have then two possibilities:  (i) $\rr(\Ao)^\perp=\{0\}$, in which case $\Ao$ is actually informationally complete with respect to all states, or (ii)  $\rr(\Ao)^\perp = \R \,S = \{ rS:r\in\real\}$ for some invertible matrix $S$.
It can be shown that these are the only possibilities for pure state informational completeness for a qutrit system \cite{HeMaWo13} .
The case (i) was already treated earlier, so we concentrate on (ii).
In that case we have
\begin{align}\label{eq:prodrank3}
\rr (\Ao\otimes \Bo)^\perp = (S\otimes \elles{\hh_B})\oplus (\rr(\Ao)\otimes \rr(\Bo)^\perp) \, . 
\end{align}
This simplified structure allows us to prove the next result.

\begin{proposition}\label{prop:qutrit_pic}
Let $\dim\hh_A=3$. Then $\Ao\otimes \Bo$ is pure state informationally complete if and only if $\Ao$ and $\Bo$ are pure state informationally complete.
\end{proposition}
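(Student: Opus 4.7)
The ``only if'' direction is a special case of the general observation made at the start of Section \ref{sec:expanding}. For the nontrivial direction, if $\Ao$ is informationally complete the result is just Proposition \ref{prop:ic_alice}, so I focus on the remaining case described in \eqref{eq:prodrank3}, where $\rr(\Ao)^\perp=\R S$ for some invertible selfadjoint traceless $S$. I argue by contradiction: suppose $T\in\rr(\Ao\otimes\Bo)^\perp$ is nonzero with $\rank{T}\leq 2$. Since $T$ is selfadjoint and orthogonal to $\id$, it is traceless, so the rank bound forces $T=\lambda(\kb{\psi_+}{\psi_+}-\kb{\psi_-}{\psi_-})$ for some $\lambda>0$ and an orthonormal pair $\ket{\psi_\pm}\in\hh_A\otimes\hh_B$.

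The central identity I will use is $\tr_A((A\otimes\id)T)\in\rr(\Bo)^\perp$ for every selfadjoint $A\in\rr(\Ao)$, which follows from $\tr((A\otimes B)T)=0$ whenever $A\in\rr(\Ao)$ and $B\in\rr(\Bo)$. I apply this to the rank-one projector $A=\ket a\bra a$, which sits in $\rr(\Ao)$ precisely when $\ket a$ lies on the null cone of $S$, namely $\bra aS\ket a=0$. Setting $\ket{\phi_\pm^a}:=(\bra a\otimes\id)\ket{\psi_\pm}\in\hh_B$, the identity becomes
\begin{equation*}
\kb{\phi_+^a}{\phi_+^a}-\kb{\phi_-^a}{\phi_-^a}\in\rr(\Bo)^\perp,
\end{equation*}
an operator of rank at most two sitting in a subspace that by assumption contains no rank-two element. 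Hence it has rank at most one, so $\ket{\phi_+^a}$ and $\ket{\phi_-^a}$ are linearly dependent in $\hh_B$ for every $\ket a$ on the null cone of $S$.

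The step I expect to be the main obstacle is extending this dependence from the null cone to all of $\hh_A$. Fixing an $S$-eigenbasis $\{\ket{i}\}_{i=1}^{3}$ and expanding $\ket{\psi_\pm}=\sum_i\ket{i}\otimes\ket{g_i^\pm}$, the dependence at $\ket a$ translates into the vanishing at $\vec b=\overline{\vec a}$ of the holomorphic quadratic form $Q(\vec b)=\sum_{i,j}b_ib_j\,(\ket{g_i^+}\wedge\ket{g_j^-})$. Because $S$ is traceless with all eigenvalues nonzero, its signature is mixed and the null cone $\{\vec b\in\C^{3}:\sum_i s_i|b_i|^{2}=0\}$ is a real five-dimensional subset of $\C^{3}$, hence cannot be contained in any proper complex algebraic subvariety (the latter having real dimension at most four). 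This forces $Q\equiv 0$ on $\C^{3}$; concretely, the vanishing can be verified by fixing cone-compatible moduli $|b_1|,|b_2|,|b_3|$ and matching Fourier coefficients as the phases of the $b_i$ vary independently, yielding $\ket{g_i^+}\wedge\ket{g_i^-}=0$ for each $i$ together with $\ket{g_i^+}\wedge\ket{g_j^-}+\ket{g_j^+}\wedge\ket{g_i^-}=0$ for $i\ne j$. In particular, $\ket{\phi_+^a}$ and $\ket{\phi_-^a}$ are linearly dependent for every $\ket a\in\hh_A$.

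A standard lemma on pairs of linear maps $\Phi_\pm:\C^3\to\hh_B,\ \vec b\mapsto\sum_i b_i\ket{g_i^\pm}$ with pointwise linearly dependent images then yields two cases: either $\Phi_+$ is a scalar multiple of $\Phi_-$ --- forcing $\ket{\psi_+}\propto\ket{\psi_-}$ and contradicting orthogonality --- or one of the two maps has rank at most one, so that the corresponding $\ket{\psi_\pm}$ is a product state. Taking for instance $\ket{\psi_-}=\ket{a_-}\otimes\ket{b_-}$ and reinserting into the wedge conditions of the previous paragraph, one finds $\ket{g_i^+}\in\C\ket{b_-}$ for every $i$, so $\ket{\psi_+}=\ket{a_+}\otimes\ket{b_-}$ also factorizes with the same $\hh_B$-factor. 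Consequently $T$ is the simple tensor $(\kb{a_+}{a_+}-\kb{a_-}{a_-})\otimes\kb{b_-}{b_-}$, and its orthogonality to $\rr(\Ao)\otimes\rr(\Bo)$ forces either $\kb{a_+}{a_+}-\kb{a_-}{a_-}\in\R S$ or $\kb{b_-}{b_-}\in\rr(\Bo)^\perp$. The first is impossible because $S$ is invertible (rank three) while the difference has rank two; the second is impossible because $\rr(\Bo)^\perp$ is traceless while the projector has trace one. The resulting contradiction completes the argument.
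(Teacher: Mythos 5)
Your proof is correct, but it follows a genuinely different route from the one in the paper. The paper works directly with the block decomposition coming from \eqref{eq:prodrank3}: it writes $T$ as a $3\times 3$ block matrix with diagonal blocks $s_jL+R_j$, conjugates by a phase-parametrized unitary $U(\alpha,\beta)$ chosen so that the $(1,1)$ block of $UTU^*$ lands in $\rr(\Bo)^\perp$ (the $L$-contributions cancel because $s_1+s_2+s_3=0$), and then either reads off $\rank{T}\geq 3$ from a compression, or uses the linear independence of $1,\cos\alpha,\sin\alpha,\dots,\sin(\beta-\alpha)$ to force $T$ into a form that can be handled blockwise. You instead assume a traceless rank-$2$ violator $T=\lambda(\kb{\psi_+}{\psi_+}-\kb{\psi_-}{\psi_-})$ exists and exploit the rank-one projectors $\kb{a}{a}$ lying in $\rr(\Ao)$, i.e.\ the null cone $\ipp{a}{S}{a}=0$; the key structural input is the same (tracelessness and invertibility of $S$), but you use it to guarantee that this cone is indefinite and hence real five-dimensional, so that the pointwise linear dependence of $(\bra{a}\otimes\id)\ket{\psi_\pm}$ propagates from the cone to all of $\C^3$, after which a classical dichotomy for pointwise linearly dependent linear maps (proportional, or one of rank $\leq 1$) finishes the job. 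Each approach has its merits: the paper's is elementary linear algebra and stays entirely within the rank calculus that also drives Propositions~\ref{prop:ic_alice} and \ref{prop:ICVPIC}, whereas yours isolates the geometric reason the qutrit case works --- $\rr(\Ao)$ contains a Zariski-dense family of rank-one projectors --- which suggests a possible route to higher-dimensional generalizations. Two small points you should tighten in a final write-up: state and prove (or reference) the dichotomy lemma for the maps $\Phi_\pm$, since the degenerate subcases (one map vanishing on the kernel of the other) need the short argument; and note explicitly that $\tr_1{(\kb{a}{a}\otimes\id)T}=\lambda(\kb{\phi_+^a}{\phi_+^a}-\kb{\phi_-^a}{\phi_-^a})$ is selfadjoint, so that it genuinely lies in $\rr(\Bo)^\perp\subset\elles{\hh_B}$ and not merely in its complex span.
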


\begin{proof}
One implication has already been established at the beginning of this section. Let us then prove the other one.
Assume that $\Ao$ and $\Bo$ are both pure state informationally complete. We may assume that neither of them is informationally complete since this case was already in Proposition \ref{prop:ic_alice}. 
Let $\rr(\Ao)^\perp = \R \,S $ with a full rank matrix $S$. It is not restrictive to assume that 
\begin{align*}
S= \left(\begin{array}{ccc}
s_{1} & 0 & 0\\
0 & s_{2} & 0 \\
0 & 0 & s_{3}
\end{array}\right)
\end{align*}
where $s_i\in\R$ with $s_1s_2s_3\neq 0$ and $s_1 + s_2 + s_3 = 0$. 
By \eqref{eq:prodrank3} any  $T\in\rr(\Ao\otimes \Bo)^\perp$ can be written as a block matrix 
\begin{align*}
T & =\left(
\begin{array}{ccc}
T_{11} & T_{12} & T_{13}\\
T_{21} & T_{22} & T_{23}\\
T_{31} & T_{32} & T_{33}
\end{array}
\right)
\end{align*}
where $T_{jj}  = s_j L + R_j $ with $R_j\in\rr(\Bo)^\perp$  and $ L\in\elles{\hh_B}$, and each $T_{jk}$ with $j\neq k$ is from the complex linear span of $\rr(\Bo)^\perp$ and satisfies $ T_{jk}^* = T_{kj}$.

Choose the unitary block matrix
\begin{align*}
U = \frac{1}{\sqrt{6}}\,\left(
\begin{array}{ccc}
\sqrt{2}\,\id & \sqrt{2}\e^{i\alpha}\,\id & \sqrt{2}\e^{i\beta}\,\id\\
\sqrt{3}\,\id & 0 & -\sqrt{3}\e^{i\beta}\,\id\\
\id & -2\e^{i\alpha}\,\id & \e^{i\beta}\,\id
\end{array}
\right) \, .
\end{align*}
Then $\widetilde{T} = UTU^\ast$ is such that
\begin{align*}
\widetilde{T}_{11} & = \frac{1}{3}\, \sum_{i=1}^3 R_i + \frac{2}{3}\, [\Re T_{21} \cos\alpha - \Im T_{21} \sin\alpha\\
& + \Re T_{31} \cos\beta - \Im T_{31} \sin\beta\\
& + \Re T_{32} \cos(\beta-\alpha) - \Im T_{32} \sin(\beta-\alpha)] \,.
\end{align*}
If now $\widetilde{T}_{11} \neq 0$ for some $\alpha,\beta$, then $\rank{T} = \rank{\widetilde{T}} \geq \rank{ \widetilde{T}_{11}} \geq 3$  since $\widetilde{T}_{11} \in \rr(\Bo)^\perp$ and $\Bo$ is pure state informationally complete. 
Suppose instead that  $\widetilde{T}_{11} = 0$ for all $\alpha,\beta$. 
Then by the linear independence of the functions $1$, $\cos\alpha$, $\sin\alpha$, $\cos\beta$, $\sin\beta$, $\cos(\beta-\alpha)$ and $\sin(\beta-\alpha)$ in the two variables $\alpha$ and $\beta$, we have $\sum_{j=1}^3 R_j = 0$ and $T_{jk} = 0$ for $j\neq k$.
Thus, if $T_{jj} \neq 0$ for all $j$, then $\rank{T}\geq 3$ trivially. 
If otherwise $T_{jj} = 0$ for some $j$, then $L=-R_j/s_j$, and $T_{kk} = R_k - (s_k/s_j)R_j \in \rr(\Bo)^\perp$ is $0$ or has rank at least $3$ by the pure state informational completeness of $\Bo$. 
Thus, $T = 0$ or $\rank T\geq \rank{T_{kk}}\geq 3$ for some $k$. 
In conclusion, $\Ao\otimes \Bo$ is pure state informationally complete.
\end{proof}

\section{Verifying purity}
In order for a pure state informationally complete measurement to be useful, the experimenter must know {\em a priori} that the system is in a pure state. Since the purity of the state is a very delicate property, it would be desirable to be able to verify this premise directly from the statistics. Furthermore, as the ultimate goal is the determination of the state after verifying the premise, one actually needs a measurement that can distinguish an arbitrary pure state from any other state, pure or mixed.

We say that a measurement is \emph{verifiably pure state informationally complete}   if the measurement outcome statistics of a pure state is different from the outcome statstics of any other state \cite{ChDaJietal13}.  Note that the minimal number of measurement outcomes needed for verifiable pure state informational completeness also scales linearly with the dimension of the system \cite{ChDaJietal13}. A mathematical criterion for this property  can be formulated as follows \cite{CaHeScTo14}:  an observable $\Ao$ is verifiably pure state informationally complete if and only if every nonzero $T\in\rr(\Ao)^\perp$ has 
\(
\rankd{T}\geq 2
\), 
where 
$$
\rankd{T} = \min\{\rank{T+\abs{T}},\,\rank{T-\abs{T}}\}
$$
is the minimum between the number of strictly positive and strictly negative eigenvalues of $T$.

As an immediate consequence, in the qubit and qutrit cases verifiable pure state informational completeness is equivalent to informational completeness. Indeed, in these cases the above criterion implies that $\Ao$ is verifiably pure state informationally complete if and only if $\rr(\Ao)^\perp = \{0\}$.

For composite systems verifiable pure state informational completeness behaves in a similar way as pure state informational completeness. Indeed, as done in the previous section, one can easily  prove   that  a necessary condition for $\Ao\otimes\Bo$ to be verifiably pure state informationally complete is that both $\Ao$ and $\Bo$ are such.
Moreover, the next two results are the complete analogues of Propositions~\ref{prop:ic_alice} and \ref{prop:qutrit_pic}.

\begin{proposition}
\label{prop:ICVPIC}
Let $\Ao$ be an informationally complete measurement.
If $\Bo$ is  verifiably pure state informationally complete, then $\Ao\otimes\Bo$ is  verifiably pure state informationally complete.
\end{proposition}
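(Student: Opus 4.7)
My plan is to adapt the proof of Proposition~\ref{prop:ic_alice} to the refined criterion $\rankd{T}\geq 2$ for verifiable pure state informational completeness. Since $\Ao$ is informationally complete, the same decomposition argument as before gives $\rr(\Ao\otimes\Bo)^\perp = \elles{\hi_A}\otimes\rr(\Bo)^\perp$, so any nonzero $T$ in this space can be written in block form $T = \sum_{j,k}|j\rangle\langle k|\otimes T_{jk}$ with $T_{jk}$ in the complex linear span of $\rr(\Bo)^\perp$ and $T_{jk}^\ast = T_{kj}$. The goal is to show $\rankd{T}\geq 2$, and I will split into the same two cases as in Proposition~\ref{prop:ic_alice}. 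The crucial new tool is Cauchy's interlacing theorem, which implies that whenever $A$ is a principal submatrix of a selfadjoint matrix $H$, both the number of strictly positive and the number of strictly negative eigenvalues of $H$ are at least those of $A$; in particular, $\rankd{H}\geq\rankd{A}$.

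In the first case some diagonal block $T_{jj}\neq 0$. Then $T_{jj}$ is selfadjoint and lies in $\rr(\Bo)^\perp$, so the verifiable pure state informational completeness of $\Bo$ gives $\rankd{T_{jj}}\geq 2$. Because $T_{jj}$ is a principal submatrix of $T$, Cauchy interlacing yields $\rankd{T}\geq\rankd{T_{jj}}\geq 2$, which closes this case.

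In the second case all diagonal blocks vanish and some $T_{jk}$ with $j\neq k$ is nonzero. Following the block-matrix trick of Proposition~\ref{prop:ic_alice}, I will consider the principal submatrix
\begin{align*}
\widetilde{T} = \left(\begin{array}{cc} 0 & T_{jk} \\ T_{jk}^\ast & 0 \end{array}\right),
\end{align*}
whose eigenvalues are precisely $\pm s_i$, where the $s_i$ are the singular values of $T_{jk}$; hence $\rankd{\widetilde{T}}=\rank{T_{jk}}$. By Cauchy interlacing $\rankd{T}\geq\rankd{\widetilde{T}}$, so it suffices to show $\rank{T_{jk}}\geq 2$. Since $T_{jk}\neq 0$, at least one of $\Re T_{jk}$ and $\Im T_{jk}$ is nonzero; both are selfadjoint and belong to $\rr(\Bo)^\perp$. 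Say $\Re T_{jk}\neq 0$; then the hypothesis on $\Bo$ gives $\rankd{\Re T_{jk}}\geq 2$, so $\Re T_{jk}$ has at least two strictly positive and two strictly negative eigenvalues, whence $\rank{\Re T_{jk}}\geq 4$. Combining this with the elementary subadditivity $\rank{\Re T_{jk}}\leq \rank{T_{jk}}+\rank{T_{jk}^\ast}=2\rank{T_{jk}}$ yields $\rank{T_{jk}}\geq 2$, completing the argument.

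No step here looks genuinely hard: the interlacing inequality does all the work of lifting local bounds on blocks to global bounds on $T$, the block anti-diagonal identity cleanly connects $\rankd{\widetilde{T}}$ to $\rank{T_{jk}}$ via singular values, and the passage from $\rank{\Re T_{jk}}\geq 4$ to $\rank{T_{jk}}\geq 2$ is immediate. The only conceptual subtlety, and the place requiring the most care, is recognizing that upgrading $\rank{T_{jk}}\geq 3$ (the bound used in Proposition~\ref{prop:ic_alice}) to the stronger $\rankd$-type bound requires multiplying by a factor of two at both levels: $\rankd{\Re T_{jk}}\geq 2$ forces $\rank{\Re T_{jk}}\geq 4$, which then forces the desired $\rank{T_{jk}}\geq 2$.
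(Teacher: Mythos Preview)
Your proof is correct and follows the same overall architecture as the paper: the block decomposition $\rr(\Ao\otimes\Bo)^\perp=\elles{\hi_A}\otimes\rr(\Bo)^\perp$, the split into the cases ``some diagonal block nonzero'' versus ``all diagonal blocks zero'', and the use of Cauchy interlacing to pass from principal submatrices to $T$.

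The one place where your argument genuinely diverges from the paper is the handling of the second case. The paper constructs two explicit unitaries $U$ and $V$ that conjugate the anti-diagonal block $\widetilde{T}$ into matrices $\widetilde{T}_\pm$ having $\Re T_{jk}$ (respectively $\Im T_{jk}$) as a diagonal block, and then applies Cauchy interlacing once more to get $\rankd{\widetilde{T}}\geq\rankd{\Re T_{jk}}$ directly. You instead invoke the singular-value description of the spectrum of $\widetilde{T}$ to obtain the exact equality $\rankd{\widetilde{T}}=\rank{T_{jk}}$, and then bound $\rank{T_{jk}}$ from below via the subadditivity $\rank{\Re T_{jk}}\leq 2\rank{T_{jk}}$ combined with $\rank{\Re T_{jk}}\geq 4$. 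Both routes are short and valid; the paper's version avoids the detour through $\rank{T_{jk}}$ but needs the explicit unitaries, while yours trades those unitaries for two standard matrix-analysis facts (the SVD spectrum of an anti-diagonal block matrix and rank subadditivity).
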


\begin{proof}
It is not restrictive to assume that $d_B = \dim\hh_B \geq 4$, as otherwise the verifiable pure state informational completeness of $\Bo$ implies that $\Bo$ is informationally complete, and the claim is the usual local distinguishability.

As in the proof of Proposition~\ref{prop:ic_alice}, 
 each element $T\in \rr(\Ao\otimes\Bo)^\perp$ can be written as a $d_A\times d_A$ block matrix 
with entries $T_{ij}$ in  the complex linear span of the set $\rr(\Bo)^\perp$, and such that $T_{ij}^\ast=T_{ji}$.
We now assume that $T\neq 0$ and show that $\rankd{T} \geq 2$. By the criterion stated above, this will imply the verifiable pure state informational completeness of $\Ao\otimes\Bo$.
 
If $T_{ii}\neq 0$ for some $i$, then $T_{ii}\in\rr(\Bo)^\perp\setminus\{0\}$, hence $\rankd{T_{ii}}\geq 2$. 
We claim that in this case $\rankd{T} \geq 2$.
Indeed, $T_{ii}$ is a compression of $T$ to a $d_B$-dimensional subspace of $\hh_\Ao\otimes\hh_\Bo = \C^{d_A d_B}$.
The Cauchy interlacing theorem (see \cite[Corollary III.1.5]{Bhatia}) states that, for each $j=1,2,\ldots, d_B$, 
\[
\lambda_j(T)\geq \lambda_j (T_{ii})\geq \lambda_{j+d_B(d_A-1)}(T) \,,
\]
where $\lambda_j(T)$, $\lambda_j (T_{ii})$ are the eigenvalues of $T$ and $T_{ii}$, respectively, possibly repeated according to their multiplicities and listed in decreasing order.
The condition $\rankd{T_{ii}}\geq 2$ means that $T_{ii}$ has at least $2$ strictly positive and $2$ strictly negative eigenvalues. 
Hence, the first of the Cauchy interlacing inequalities yields
\[
\lambda_j(T) \geq \lambda_j(T_{ii})>0 \quad \textrm{ for } j=1,2
\]
and the second gives
\[
0>\lambda_j(T_{ii}) \geq \lambda_{j+d_B(d_A-1)}(T) \quad \textrm{ for } j=d_B -1,d_B\,.
\]
In other words, $T$ also has at least $2$ strictly positive and $2$ strictly negative eigenvalues, that is, $\rankd{T}\geq 2$.

Finally it remains to consider the case in which  $T_{ii} = 0$ for all $i=1,\ldots,d_A$.  
As $T\neq 0$, we have $T_{ij} \neq 0$ for some $i\neq j$. 
By relabeling the entries of $T$ if necessary, we can assume that $i=1$ and $j=2$. Consider then the upper-left square minor
$$
\widetilde{T} = \left(
\begin{array}{cc}
0 & T_{12} \\
T_{12}^\ast & 0
\end{array}
\right)
= U \widetilde{T}_+ U^* = V \widetilde{T}_- V^*
$$
where
\begin{align*}
\widetilde{T}_+ & = \left(
\begin{array}{cc}
\Re T_{12}  & i\,\Im T_{12} \\
-i\,\Im T_{12} & -\Re T_{12}
\end{array}
\right) \\
\widetilde{T}_- &= \left(
\begin{array}{cc}
\Im T_{12} & \Re T_{12} \\
\Re T_{12} & -\Im T_{12}
\end{array}
\right)
\end{align*}
and $U$ and $V$ are the unitary block matrices
$$
U = \frac{1}{\sqrt{2}} \left(
\begin{array}{cc}
\id & -\id \\
\id & \id
\end{array}
\right)
\qquad
V = \frac{1}{\sqrt{2}} \left(
\begin{array}{cc}
\id & -i\id \\
-i\id & \id
\end{array}
\right) \,.
$$
By another application of the Cauchy interlacing inequalities,
\begin{align*}
\rankd{T} & \geq \rankd{\widetilde{T}} = \rankd{\widetilde{T}_+} \geq \rankd{\Re T_{12}}
\end{align*}
and
\begin{align*}
\rankd{T} & \geq \rankd{\widetilde{T}} = \rankd{\widetilde{T}_-} \geq \rankd{\Im T_{12}} \,.
\end{align*}
Both $\Re T_{12}$ and $\Im T_{12}$ belong to $\rr(\Bo)^\perp$ and at least one of them is nonzero as $T_{12}$ is nonzero.
Therefore, $\rankd{T}\geq 2$.
\end{proof}

\begin{proposition}\label{prop:qutrit_vpic}
Let $\dim\hi_A=3$. Then $\Ao\otimes \Bo$ is verifiably pure state informationally complete if and only if $\Ao$ and $\Bo$ are verifiably pure state informationally complete.
\end{proposition}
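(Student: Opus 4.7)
The plan is to reduce both directions to results already established in the paper. For the ``if'' direction, recall the observation in the paragraph preceding the statement, which points out that in dimension at most $3$ verifiable pure state informational completeness is equivalent to (full) informational completeness. Assuming $\Ao$ is verifiably pure state informationally complete and $\dim \hi_A = 3$, we therefore have $\rr(\Ao)^\perp = \{0\}$, i.e.\ $\Ao$ is informationally complete. The conclusion follows immediately by applying Proposition~\ref{prop:ICVPIC} to this $\Ao$ (informationally complete) together with $\Bo$ (verifiably pure state informationally complete).

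For the ``only if'' direction, I would adapt the tensoring-with-a-pure-state argument that already appeared at the start of Section~\ref{sec:expanding} and that implicitly underlies the necessary conditions in Propositions~\ref{prop:ic_alice} and~\ref{prop:qutrit_pic}. Suppose, for contradiction, that $\Ao$ fails to be verifiably pure state informationally complete, so that some nonzero $S \in \rr(\Ao)^\perp$ satisfies $\rankd{S}\leq 1$. Pick any pure state $\sigma = \ket{\psi}\bra{\psi}$ on $\hi_B$. Using the decomposition
\[
\rr(\Ao\otimes\Bo)^\perp = (\rr(\Ao)^\perp \otimes \elles{\hi_B}) \oplus (\rr(\Ao) \otimes \rr(\Bo)^\perp)
\]
recalled in Section~\ref{sec:expanding}, the product $S\otimes\sigma$ lies in $\rr(\Ao)^\perp\otimes\elles{\hi_B}\subseteq\rr(\Ao\otimes\Bo)^\perp$. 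Since the nonzero eigenvalues of $S\otimes\sigma$ are exactly those of $S$, we have $\rankd{S\otimes\sigma}=\rankd{S}\leq 1$, contradicting the verifiable pure state informational completeness of $\Ao\otimes\Bo$. An entirely analogous argument, tensoring on the left with a pure state of $\hi_A$, rules out the possibility that $\Bo$ fails to be verifiably pure state informationally complete.

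I do not expect any serious obstacle. All the real technical content has already been carried out, either inside Proposition~\ref{prop:ICVPIC} or in the earlier remark identifying verifiable PIC with IC in dimension at most three; the present statement is essentially a direct corollary, combined with the routine pure-state-tensoring argument for the necessary condition.
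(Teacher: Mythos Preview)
Your proposal is correct and follows essentially the same route as the paper: for sufficiency you invoke the equivalence of verifiable pure state informational completeness with informational completeness in dimension $3$ and then apply Proposition~\ref{prop:ICVPIC}, exactly as the paper does; for necessity you spell out the tensoring-with-a-pure-state argument that the paper alludes to just before Proposition~\ref{prop:ICVPIC} (``as done in the previous section, one can easily prove\ldots''). The only difference is that you give the details of the necessity argument explicitly, which is fine.
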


\begin{proof}
As already noticed, necessity is easy. On the other hand, in dimension $3$ verifiable pure state informational completeness is equivalent to informational completeness, hence sufficiency follows by Proposition \ref{prop:ICVPIC}.
\end{proof}

\section{Extension to multipartite systems}
Since in Proposition \ref{prop:qutrit_pic} no assumption regarding the dimension of Bob's system was made, we can use it to  obtain an extension to the multipartite case. Suppose that we have $N$ quantum systems, each of which is either a qubit or a qutrit, and suppose that we have the corresponding $N$  pure state informationally complete measurements described by the POVMs $\Ao_i$. We denote  
$$
\Ao ({\bf x}) = \bigotimes_{i=1}^N \Ao_i (x_i), \qquad {\bf x} = (x_1,\ldots, x_N).
$$
We consider the splitting of the POVM into two parts
$$
\Ao = \Ao_1 \otimes \left( \bigotimes_{i=2}^N \Ao_i\right).
$$ 
Propositions \ref{prop:ic_alice} and \ref{prop:qutrit_pic} tell us that $\Ao$ is pure state informationally complete if and only if each of the two factors is such (recall that a pure state informationally complete qubit measurement is necessarily informationally complete). 
By induction, we may conclude that actually $\Ao$ is pure state informationally complete if and only if each component $\Ao_i$ is such. 
. 

This multipartite extension also gives a class of POVMs on higher dimensional systems for which the pure state informational completeness of a pair of POVMs carries over to their tensor product. Namely, suppose that the only prime components of the dimensions $d_A$ and $d_B$ are $2$ and $3$, that is  $d_A = 2^{n_A}3^{m_A}$ and $d_B = 2^{n_B} 3^{m_A}$. This means that we can write 
$$
\hh_A = \left(\bigotimes_{j=1}^{n_A} \C^2 \right)\otimes \left(\bigotimes_{k=1}^{m_A} \C^3\right),
$$
and similarly for $\hh_B$. Suppose now that $\Ao$ and $\Bo$ are pure state informationally complete POVMs which also factorize into tensor products
$$
\Ao = \bigotimes_{j=1}^{n_A + m_A} \Ao_j, \qquad \Bo = \bigotimes_{k=1}^{n_B + m_B} \Bo_k 
$$
where each component acts on $\C^2$ or $\C^3$. Each component must be pure state informationally complete, and therefore by our multipartite result, so is $\Ao\otimes\Bo$. 

\section{Conclusions} 
An entangled pure state of a composite system has mixed reduced states.
For this reason, a local measurement not distinguishing mixed states may seem quite useless for quantum tomography on a bipartite system.
However, as we have shown, if Alice can implement an informationally complete measurement and Bob can distinguish all pure states with his measurement, then they can together distinguish all pure states of the composite system. Furthermore, in the case that Alice's system is a qutrit she can also choose to perform merely a pure state informationally complete measurement while still maintaining the ability to distinguish all pure bipartite states. In a similar manner, if Alice and Bob are able to verify the purity of the state of their respective systems, then in the above cases the composite measurement also has this property.

Quantum theory has two quite opposite features: information is stored globally (entanglement), but can be retrieved with local measurements (the principle of local distinguishability). 
It is an interesting question if this balance can be properly quantified and if it is unique to quantum theory. We believe that our investigation on the principle of local distinguishability can stimulate a new direction in the axiomatization of quantum theory, but also may help to design quantum tomography schemes with reduced resources.

\section*{Acknowledgments}
The authors are grateful to Tom Bullock and M\'ario Ziman for their comments on an earlier version of this manuscript.
TH acknowledges financial support from the Academy of Finland (grant no 138135). JS and AT acknowledge financial support from the Italian Ministry of Education, University and Research (FIRB project RBFR10COAQ).


\end{document}